\documentclass[11pt]{article}

\usepackage[letterpaper,margin=1in]{geometry}
\usepackage{amsmath,amssymb,amsthm}
\usepackage{microtype}
\usepackage{hyperref}
\usepackage{xcolor}

\hypersetup{
  colorlinks=true,
  linkcolor=blue!55!black,
  citecolor=blue!55!black,
  urlcolor=blue!55!black,
  pdftitle={Parity Tests under Ties: A One-Test Lifting Theorem},
  pdfauthor={Ron Kupfer}
}

\theoremstyle{plain}
\newtheorem{theorem}{Theorem}[section]
\newtheorem{lemma}[theorem]{Lemma}
\newtheorem{corollary}[theorem]{Corollary}

\newcommand{\R}{\mathbb{R}}
\newcommand{\sgn}{\operatorname{sgn}}

\title{Parity Tests under Ties: A One-Test Lifting Theorem}
\author{Ron Kupfer}
\date{}

\begin{document}
\maketitle

\begin{abstract}
In the unrestricted polynomial decision-tree model, only the number of
polynomial sign tests is charged. A parity test asks for the sign of a product
of pairwise differences. Such tests
underlie low-depth randomized algorithms for maximum finding and top-\(k\)
selection, but their usual analysis assumes distinct inputs because a tie makes
the product vanish. We give a black-box lifting theorem that removes this
assumption. After \(O(\log n)\) polynomial tests determine the number of
nonzero pairwise differences, every subsequent parity test is simulated by one
polynomial test, consistently with a fixed lexicographic tie-breaking order.
The simulator is an elementary symmetric polynomial in masked first and second
powers of all pairwise differences. Thus a depth-\(D\) parity-test tree on
distinct inputs becomes a polynomial decision tree of depth \(D+O(\log n)\)
on arbitrary inputs, with no increase in randomized pointwise error for
order-selection problems. We obtain maximum finding in depth
\(O(\log n[\log n+\log(1/\delta)])\) with error \(\delta\), and top-\(k\)
selection in depth \(O(\log^2 n+k\log n)\) with inverse-polynomial error, both
without any promise on ties.
\end{abstract}

\section{Introduction}

In the unrestricted polynomial decision-tree model, a query returns
\(\sgn p(x)\in\{-1,0,+1\}\) for an arbitrary real polynomial \(p\); only query
depth is charged. Rabin's complete-proof method gives the sharp \(n-1\) deterministic
depth lower bound for maximum finding, with the relevant generic-width
argument formalized by Monta\~na, Pardo, and Recio
\cite{Rabin1972,MPR94}. Randomization behaves very differently. Ting and
Yao~\cite{TY94} find the maximum of \(n\) distinct reals with
\(O((\log n)^2)\) polynomial tests and inverse-polynomial error. Lam and
Ting~\cite{LT98} find the \(k\) largest distinct inputs with
\(O(\log^2 n+k\log n)\) tests. Their primitive is a \emph{parity test} of the
form
\[
  \sgn\prod_{(a,b)\in C}(x_a-x_b),
\]
where \(C\) is a collection of ordered pairs; the star case
\(C=\{(s,j):j\in B\}\) is central to maximum finding. Ting's earlier thesis
also studies selection under parity-like tests~\cite{TingThesis}.

A tie makes a literal parity query vanish.
Naor and Ruah~\cite[Secs.~4.3.1 and 6]{NaorRuah2001} report that an
unpublished 1996 manuscript by Ben-Or gives an \(O((\log n)^2)\)-test
maximum-finding algorithm for inputs with ties. Their brief description
counts equalities within sampled subsets, but its printed
squared-difference expression cannot itself determine parity, and the
description does not establish the full query bound. We have not examined
the manuscript and do not claim the tied-input \(O((\log n)^2)\) maximum
bound as new. Separately, Ben-Or supplied the complete
\(O((\log n)^3)\)-depth construction reported in our preliminary
note~\cite{KupferTies}; it spends \(O(\log n)\) tests per parity query.
Here we give a black-box simulation after \(O(\log n)\) shared preprocessing,
with one test per parity query and total depth \(D+O(\log n)\) for any
depth-\(D\) parity-test tree.

Specifically, we canonically orient all
\(N=\binom n2\) pairwise differences and let \(K\) be the number that are
nonzero. The value \(K\) is found by binary search in \(O(\log N)=O(\log n)\)
tests; no tie-class sizes are assumed known. Assign one masked value per pair:
its difference if the query uses that pair an odd number of times, and its
square otherwise, including for absent pairs.

Squaring preserves the zero set. Upon evaluation at the input, the \(K\)th
elementary symmetric
polynomial in these \(N\) masked differences
has exactly one nonzero summand: the product over all nonzero masked
differences.
Its sign is the requested parity when every tie is resolved by index. This one
identity yields a black-box compiler, not merely a maximum-finding gadget.
To our knowledge, this global one-test realization and the resulting additive
\(O(\log n)\) lifting theorem have not appeared previously.

\begin{theorem}[Lifting theorem, informal]
\label{thm:informal}
Every depth-\(D\) randomized parity-test decision tree for distinct real inputs
has a polynomial decision-tree simulation on arbitrary real inputs of depth
\[
  D+\left\lceil\log_2\!\left(\binom n2+1\right)\right\rceil.
\]
On a tied input, the simulated transcript is exactly the original transcript
under the fixed rule that the smaller index wins every tie.
\end{theorem}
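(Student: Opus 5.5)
The plan has three parts: a preprocessing stage that computes $K$, a one-test simulation of each parity query, and a check that randomness and tie-breaking are handled correctly.

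\textbf{Preprocessing.} Orient every pair canonically as $d_{ab}=x_a-x_b$ with $a<b$, and let $N=\binom n2$. Let
\[
  K=\#\{a<b: d_{ab}\neq 0\}\in\{0,\dots,N\},
\]
so $K$ takes $N+1$ possible values. It is determined by the signs of the elementary symmetric polynomials $e_j(d_{ab}^2)$. Since all $d_{ab}^2\ge 0$, we have $e_j(d^2)>0$ exactly when $j\le K$: some $j$-subset of nonzero squares has a positive product, and every term is nonnegative. Each test ``$e_j(d^2)>0$?'' therefore answers ``$K\ge j$?''. This predicate is monotone in $j$, so binary search over $\{0,\dots,N\}$ uses $\lceil\log_2(N+1)\rceil$ sign tests. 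These tests come first and are shared by the whole tree. Only the input determines them, not the random bits, so they can be placed at the top of every tree in the distribution.

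\textbf{Simulating one parity query.} Consider a query $\sgn\prod_{(a,b)\in C}(x_a-x_b)$, with $C$ a multiset of ordered pairs. First rewrite it in canonical orientation. Each occurrence of $(b,a)$ with $a<b$ contributes a factor $-1$, so the query equals $\epsilon_C\prod_{a<b} d_{ab}^{m_{ab}}$, where $\epsilon_C\in\{\pm1\}$ is a known constant and $m_{ab}$ is the multiplicity of the pair $\{a,b\}$. Define the masked values $y_{ab}=d_{ab}$ if $m_{ab}$ is odd and $y_{ab}=d_{ab}^2$ otherwise. Then ask the single test $\sgn\bigl(\epsilon_C\, e_K(y)\bigr)$, using the $K$ already computed. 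The zero sets agree: $y_{ab}=0$ iff $d_{ab}=0$. Every $K$-subset other than the set $S$ of nonzero pairs contains a zero, so $e_K(y)=\prod_{S}y_{ab}$. Its sign is $\prod_{(a,b)\in S,\,m_{ab}\text{ odd}}\sgn d_{ab}$.

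\textbf{Tie-breaking.} Let $\pi$ be the perturbation $x'_i=x_i-\eta i$ for tiny $\eta>0$, so that ties are broken by giving the smaller index the larger value; this is ``the smaller index wins''. For $a<b$ with $d_{ab}=0$, we get $d'_{ab}=\eta(b-a)>0$, while nonzero differences keep their signs. Hence on the distinct input $x'$ the parity query has sign $\epsilon_C\prod_{m_{ab}\text{ odd}}\sgn d'_{ab}$, which equals our computed sign because tied pairs contribute $+1$. Squares always contribute $+1$ on both sides. So every answer in the simulated transcript equals the original tree's answer on $x'$, an input whose order type is the lexicographic tie-broken order of $x$. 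The transcript is therefore exactly the original one under that rule, for every fixing of the random bits. Depth is $D$ plus the preprocessing count.

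\textbf{Main obstacle.} The delicate step is getting the sign bookkeeping exactly right. This covers the reorientation constant $\epsilon_C$, the convention for $K=0$ (where $e_0=1$ and all differences are tied, giving sign $\epsilon_C$, consistent with the above), and confirming that the index-tie-break perturbation yields a positive sign for every tied canonical pair. I would verify the last point first, since the direction of the perturbation must match the ``smaller index wins'' convention and the chosen orientation $a<b$.
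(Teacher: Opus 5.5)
Your proposal is correct and follows the paper's proof essentially step for step: the same threshold tests $e_j\bigl((d_e^2)_e\bigr)>0 \iff j\le K$ with binary search (Lemma~\ref{lem:preprocess}), the same masked test $\epsilon_C\,e_K(y)$ whose unique nonzero summand gives the tie-broken sign (Lemma~\ref{lem:one-test}), and the same perturbation $x_i-\eta i$ applied to each deterministic tree in the support (Theorem~\ref{thm:lifting}). The paper additionally states that a single $\eta$, depending only on $x$, works for all adaptive queries and random choices at once; your requirement that every nonzero pairwise difference keep its sign already guarantees this.
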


The formal statement, including the condition under which output correctness
transfers, is Theorem~\ref{thm:lifting}. Maximum and top-\(k\) selection satisfy
that condition: a solution for a strict refinement of the input order remains
valid before the ties are broken. The construction uses polynomials of degree
at most \(2\binom n2\); degree and representation size are uncharged in the
model. All these polynomials nevertheless admit standard polynomial-size
arithmetic circuits; the elementary-symmetric dynamic program uses
\(O(NK)\) gates. This does not imply an analogous bound for bounded-degree
tests or for models charging arithmetic operations.

\section{The one-test simulator}
\label{sec:simulator}

Let
\[
  \mathcal E=\{\{u,v\}:1\le u<v\le n\},\qquad N=|\mathcal E|=\binom n2,
\]
and orient each edge \(e=\{u,v\}\), \(u<v\), by
\(d_e=x_u-x_v\). Define a strict total order \(\succ_x\) on the indices by
\begin{equation}
  \label{eq:tie-order}
  u\succ_x v
  \quad\Longleftrightarrow\quad
  x_u>x_v\ \text{ or }\ (x_u=x_v\text{ and }u<v).
\end{equation}
For an oriented difference \(x_a-x_b\), define its \emph{virtual sign} to be
\(+1\) if \(a\succ_x b\) and \(-1\) otherwise. For a product
\(c\prod_\ell(x_{a_\ell}-x_{b_\ell})\) with \(c\ne0\), its virtual sign is
\(\sgn(c)\) times the product of its factors' virtual signs. In particular,
the canonical difference \(d_e\) has virtual sign \(\sgn d_e\) when
\(d_e\ne0\), and \(+1\) when \(d_e=0\).

For any \(L\ge0\) and variables \(z_1,\ldots,z_L\), write
\[
  e_t(z_1,\ldots,z_L)
  =\sum_{\substack{S\subseteq[L]\\|S|=t}}\prod_{j\in S}z_j,
  \qquad e_0=1.
\]

\begin{lemma}[Global preprocessing]
\label{lem:preprocess}
Let \(K=|\{e\in\mathcal E:d_e\ne0\}|\). The value \(K\) is determined by
\(\lceil\log_2(N+1)\rceil\) polynomial tests.
\end{lemma}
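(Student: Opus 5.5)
We will determine \(K\in\{0,1,\ldots,N\}\) by binary search, using threshold tests that are single polynomial sign queries. The natural polynomials are the elementary symmetric polynomials in the squared differences. Set \(w_e=d_e^2\ge 0\) for \(e\in\mathcal E\), and for \(0\le t\le N\) consider
\[
  P_t(x)=e_t\bigl(w_e:e\in\mathcal E\bigr)=\sum_{\substack{S\subseteq\mathcal E\\|S|=t}}\prod_{e\in S}d_e^2 .
\]
Each \(P_t\) is a real polynomial in \(x\) of degree \(2t\le 2N\), so querying \(\sgn P_t(x)\) is one legal test.

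The key step is the identity
\[
  P_t(x)>0 \iff t\le K .
\]
Every summand is a product of nonnegative reals, so \(P_t(x)\ge0\), and \(P_t(x)>0\) exactly when some \(t\)-subset \(S\) has all \(d_e\ne0\). Such a subset exists iff \(t\le K\): if \(t\le K\), take any \(t\) of the \(K\) nonzero edges; if \(t>K\), every \(t\)-subset contains an edge with \(d_e=0\), so every summand vanishes. The case \(t=0\) gives \(P_0=1>0\), consistent with \(K\ge0\). Hence \(\sgn P_t(x)\in\{0,+1\}\) is the indicator of \(t\le K\), a monotone predicate in \(t\).

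Binary search now follows. Maintain an interval \([\ell,h]\) known to contain \(K\), starting with \([0,N]\). While \(\ell<h\), query \(t=\lceil(\ell+h)/2\rceil\): if \(P_t(x)>0\), set \(\ell=t\), otherwise set \(h=t-1\). The tests are adaptive, which the decision-tree model allows. Each query shrinks the \(h-\ell+1\) candidates to at most \(\lceil (h-\ell+1)/2\rceil\), so the \(N+1\) initial candidates are resolved after \(\lceil\log_2(N+1)\rceil\) tests. If some branch finishes early, pad it with dummy tests to match the stated count exactly.

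No step is a real obstacle; the only point needing care is the nonnegativity argument, which rules out cancellation among summands. This is why squares, rather than the signed differences \(d_e\), are used.
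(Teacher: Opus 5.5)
Your proof is correct and is essentially the paper's argument: the same thresholds \(H_t=e_t\bigl((d_e^2)_{e\in\mathcal E}\bigr)\), the same nonnegativity observation giving \(H_t>0\iff t\le K\), and binary search over the \(N+1\) candidate values. Your explicit halving bound (at most \(\lceil m/2\rceil\) candidates remain after each test) simply spells out the routine count that the paper leaves implicit.
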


\begin{proof}
For \(0\le t\le N\), put
\begin{equation}
  H_t=e_t\bigl((d_e^2)_{e\in\mathcal E}\bigr).
  \label{eq:threshold}
\end{equation}
Every monomial is nonnegative, and a positive monomial exists exactly when at
least \(t\) differences are nonzero. Hence \(H_t>0\) if and only if
\(t\le K\). Binary search for this threshold among \(N+1\) possibilities
proves the claim.
\end{proof}

A parity query is the sign of
\begin{equation}
  \label{eq:parity-query}
  P=c\prod_{\ell=1}^r(x_{a_\ell}-x_{b_\ell}),
  \qquad r\ge0,\quad c\in\R\setminus\{0\},\quad a_\ell\ne b_\ell,
\end{equation}
where repetitions and either orientation are allowed; when \(r=0\), the
product is \(1\). Let \(S\subseteq\mathcal E\) contain precisely the
undirected edges occurring an odd number of times. Set \(\sigma\) to
\(\sgn(c)\), with its sign flipped for every reversed factor
\((x_a-x_b)\) with \(a>b\). Since each comparison sign under \(\succ_x\)
is \(\pm1\), repeated factors cancel in pairs. On distinct inputs the
query has the same sign as the squarefree query
\(P^\flat=\sigma\prod_{e\in S}d_e\). Their virtual signs under
\(\succ_x\) also agree. At a tie, this virtual sign is different from the
literal sign of a polynomial that vanishes. For the simulator set
\[
  y_e=\begin{cases}
    d_e,&e\in S,\\
    d_e^2,&e\notin S.
  \end{cases}
\]

\begin{lemma}[One-test simulation]
\label{lem:one-test}
Once \(K\) is known, the sign of the parity query
\eqref{eq:parity-query} under the total order \(\succ_x\) is obtained by the
single polynomial test
\begin{equation}
  Q_P=\sigma\,e_K\bigl((y_e)_{e\in\mathcal E}\bigr).
  \label{eq:simulator}
\end{equation}
Moreover, \(Q_P(x)\ne0\).
\end{lemma}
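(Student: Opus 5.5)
We evaluate \(e_K\) at the input \(x\) and track which summands survive. Let \(Z=\{e\in\mathcal E:d_e\ne0\}\), so \(|Z|=K\) by Lemma~\ref{lem:preprocess}. Masking does not change the zero set: \(y_e\ne0\) exactly when \(d_e\ne0\), because \(y_e\) is either \(d_e\) or \(d_e^2\). A summand \(\prod_{e\in T}y_e\) with \(|T|=K\) is therefore nonzero exactly when \(T\subseteq Z\). Since \(|Z|=K\), this forces \(T=Z\). Hence
\[
  Q_P(x)=\sigma\prod_{e\in Z}y_e
  =\sigma\prod_{e\in Z\cap S}d_e\prod_{e\in Z\setminus S}d_e^2 .
\]
This value is nonzero, which gives the second claim. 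Its sign is \(\sigma\prod_{e\in Z\cap S}\sgn d_e\), since each square is positive. The case \(K=0\) is consistent: then \(e_0=1\) and the sign is \(\sigma\).

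Next we identify this sign with the virtual sign of \(P\) under \(\succ_x\). We proceed in two steps.

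\textbf{Step 1: reduce \(P\) to \(P^\flat\).} The virtual sign of a factor \(x_a-x_b\) is minus that of \(x_b-x_a\), because \(\succ_x\) is a strict total order. So reversing a factor to canonical orientation \(d_e\) flips the virtual sign exactly as \(\sigma\) records. Each factor's virtual sign is \(\pm1\), so factors repeated an even number of times contribute \(+1\). Hence the virtual sign of \(P\) equals \(\sigma\prod_{e\in S}v(d_e)\), where \(v(d_e)\) is the virtual sign of \(d_e\). This is the virtual sign of \(P^\flat\), as noted before the lemma.

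\textbf{Step 2: evaluate \(v(d_e)\).} By the remark after \eqref{eq:tie-order}, \(v(d_e)=\sgn d_e\) when \(d_e\ne0\). When \(d_e=0\), we have \(e=\{u,v\}\) with \(u<v\) and \(x_u=x_v\), so \(u\succ_x v\) and \(v(d_e)=+1\). Therefore
\[
  \sigma\prod_{e\in S}v(d_e)=\sigma\prod_{e\in S\cap Z}\sgn d_e,
\]
which matches \(\sgn Q_P(x)\).

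The only real obstacle is the bookkeeping in Step 1: making sure the flips recorded in \(\sigma\) agree exactly with the virtual-sign antisymmetry, and that even multiplicities cancel whether or not the edge is tied. Both follow directly from virtual signs being \(\pm1\) and antisymmetric.
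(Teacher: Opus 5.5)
Your proposal is correct and follows the paper's own argument: the nonzero pattern of the masked values $y_e$ leaves exactly one nonzero summand of $e_K$, so $Q_P(x)=\sigma\prod_{e:d_e\ne0}y_e\ne0$, and its sign is matched to the virtual sign of $P^\flat$ (hence of $P$), with tied canonical differences contributing $+1$. Your Steps~1 and~2 expand the reduction to $P^\flat$ and the tie convention, which the paper states just before the lemma.
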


\begin{proof}
There is one \(y_e\) per edge, whether or not the edge occurs in the query.
Because \(d_e^2=0\) exactly when \(d_e=0\), we have
\(y_e\ne0\) if and only if \(d_e\ne0\). Thus exactly \(K\) values
\(y_e\) are nonzero. The unique nonzero
summand of \(e_K\) in \eqref{eq:simulator} selects every edge with
\(d_e\ne0\), and
\[
  Q_P(x)=\sigma\prod_{e:d_e\ne0}y_e\ne0.
\]
Its sign is
\[
  \sigma\prod_{e\in S:d_e\ne0}\sgn d_e.
\]
This is the virtual sign of \(P^\flat\), hence of
\eqref{eq:parity-query}: every tied canonical comparison contributes \(+1\)
as in \eqref{eq:tie-order}.
\end{proof}

For instance, if \(x=(1,1,0)\) and
\(P=(x_2-x_1)(x_1-x_3)\), then \(K=2\), \(\sigma=-1\), and
\(Q_P=-e_2(0,1,1)=-1\), as prescribed by \(1\succ_x2\succ_x3\).

\paragraph{Fixed factor families.}
The same masking identity works for any fixed family of polynomials
\(f_1,\ldots,f_M\) with prescribed signs \(\alpha_j\in\{-1,+1\}\) at zeros:
\(\lceil\log_2(M+1)\rceil\) tests count the nonzero factors, after which one
test evaluates the prescribed sign of any product of these factors. Such
assignments need not describe one perturbed input: at a tie, assigning
\(+1\) to both \(x_a-x_b\) and \(x_b-x_a\) is inconsistent with every strict
order. The tie-breaking signs in \eqref{eq:tie-order} are coherent, which
enables the lifting theorem.

For a single star query, the earlier local simulator~\cite{KupferTies} has a
complementary input-sensitive bound. Let \(b=|B|\) and
\(\mu_B=|\{a\in B:x_a=x_i\}|\). For \(1\le t\le b\), the polynomial
\(e_{b-t+1}(((x_i-x_a)^2)_{a\in B})\) vanishes exactly when
\(t\le\mu_B\). Exponential search followed by binary search finds
\(\mu_B\) in \(O(1+\log(\mu_B+1))\) tests; one test of
\(e_{b-\mu_B}((x_i-x_a)_{a\in B})\) then returns the parity with tied
factors deleted. This local bound can be smaller for an isolated query; the
global preprocessing is shared over the entire tree.

\section{A black-box lifting theorem}
\label{sec:lifting}

A parity-test decision tree is a binary decision tree whose internal queries
have the form \eqref{eq:parity-query}; on distinct inputs, no query is zero.
We allow randomization
by taking an input-independent distribution over deterministic trees. The
pointwise error is computed for a fixed input over this distribution.

Let \(R(x,o)\) specify whether output \(o\) is valid on input \(x\). We say
\(R\) is \emph{stable under tie refinement} if, whenever \(z\) has distinct
coordinates and \(x_i>x_j\) implies \(z_i>z_j\), then
\(R(z,o)\) implies \(R(x,o)\) for every output \(o\). Maximum finding and
selection of the \(k\) largest indices have this property; returning the
entire set of maximizers does not.

\begin{theorem}[Black-box lifting]
\label{thm:lifting}
Let \(\mathcal A\) be a possibly randomized parity-test decision tree of depth
\(D\) on distinct inputs. There is a polynomial decision tree
\(\widetilde{\mathcal A}\) of depth
\[
  D+\left\lceil\log_2(N+1)\right\rceil,
\]
such that, for every \(x\in\R^n\), its output distribution equals the output
distribution of \(\mathcal A\) on any distinct input whose index order is
\(\succ_x\). Consequently, pointwise error does not increase for any search
relation stable under tie refinement: the error on \(x\) is at most the error
of \(\mathcal A\) on any such distinct input.
\end{theorem}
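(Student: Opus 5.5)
We build \(\widetilde{\mathcal A}\) node by node from the deterministic trees in the support of \(\mathcal A\), using the same input-independent distribution over them. Fix one deterministic tree \(T\) of depth \(D\). The tree \(\widetilde T\) first runs the binary search of Lemma~\ref{lem:preprocess}. This is a complete binary search tree of depth \(\lceil\log_2(N+1)\rceil\), and each of its leaves identifies a value \(K\in\{0,\ldots,N\}\). Below each such leaf we hang a copy of \(T\). In that copy, every internal query \(P\) of the form \eqref{eq:parity-query} is replaced by the single polynomial test \(Q_P\) of \eqref{eq:simulator}, built with that leaf's value of \(K\). Children are matched so that the outcome \(\sgn Q_P=+1\) leads to the child \(T\) takes on a positive answer, and \(-1\) leads to the negative child. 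Leaves of \(T\) keep their outputs. By Lemma~\ref{lem:one-test}, \(Q_P(x)\ne0\), so the zero branch is never taken and may be given an arbitrary leaf. The depth is then \(D+\lceil\log_2(N+1)\rceil\).

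Next we show that transcripts agree. Fix \(x\in\R^n\) and any distinct \(z\) whose index order coincides with \(\succ_x\); such a \(z\) exists, for example by taking coordinates as ranks under \(\succ_x\). On \(x\) the preprocessing reaches the correct \(K\), because each \(H_t\) test is answered truthfully. We then induct along the root-to-leaf path. Suppose \(\widetilde T\) on \(x\) and \(T\) on \(z\) are at corresponding nodes querying \(P\). Lemma~\ref{lem:one-test} says \(\sgn Q_P(x)\) is the virtual sign of \(P\) under \(\succ_x\), which is \(\sgn(c)\) times the product of the factors' comparison signs under \(\succ_x\). Since \(z\) has distinct coordinates, each factor satisfies \(\sgn(z_a-z_b)=+1\) iff \(a\succ_z b\) iff \(a\succ_x b\). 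Hence \(\sgn P(z)\) equals the same product, so both trees take the same branch. By induction they reach corresponding leaves with the same output. Mixing over the distribution shows that the output distribution of \(\widetilde{\mathcal A}\) on \(x\) equals that of \(\mathcal A\) on \(z\).

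For the error claim, let \(z\) be any distinct input ordered by \(\succ_x\). If \(x_i>x_j\), then \(i\succ_x j\), so \(z_i>z_j\). Thus \(z\) is a strict refinement of \(x\) in the sense of stability, and \(R(z,o)\Rightarrow R(x,o)\). The probability that \(\widetilde{\mathcal A}\) outputs an invalid \(o\) on \(x\) is therefore at most the probability that \(\mathcal A\) outputs an \(o\) invalid for \(z\) on input \(z\), which is the error of \(\mathcal A\) on \(z\).

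The step that needs care is the sign identification. Lemma~\ref{lem:one-test} gives the virtual sign of \(P\), not of a literal polynomial, so we must check that this virtual sign equals \(\sgn P(z)\) for a genuine distinct input \(z\). This holds because the tie rule \eqref{eq:tie-order} is a single coherent strict total order, which a real distinct input realizes. The remaining steps are bookkeeping.
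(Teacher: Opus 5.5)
Your proposal is correct and follows the paper's proof. It prepends the binary search for \(K\), substitutes \(Q_P\) for each parity query in every deterministic tree of the support, inducts down the tree via Lemma~\ref{lem:one-test}, and concludes with tie-refinement stability; the only cosmetic difference is that you compare directly against an arbitrary distinct \(z\) ordered by \(\succ_x\) (e.g.\ ranks), whereas the paper first uses \(z_i=x_i-\epsilon i\) and then observes that transcripts on distinct inputs depend only on the index order.
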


\begin{proof}
Prepend the binary search of Lemma~\ref{lem:preprocess}. For each deterministic
tree in the support of \(\mathcal A\), replace every query
\eqref{eq:parity-query} by the polynomial
\eqref{eq:simulator}. At each new node, \(K\) and the original query are fixed
by the preceding transcript, so this is a legal adaptive polynomial test.
Lemma~\ref{lem:one-test} shows that its zero branch is never reached and that
its sign is the original query sign under \(\succ_x\).

For all sufficiently small \(\epsilon>0\), the distinct vector
\(z_i=x_i-\epsilon i\) has index order \(\succ_x\).
This single \(\epsilon\) depends only on \(x\), not on the adaptive queries
or random choices; the simulator does not need to compute it.
Induction down the tree shows that the lifted transcript on \(x\) equals the
original transcript on \(z\). On distinct inputs, every parity-query sign, and
hence the entire transcript, depends only on the strict index order. Thus the
same conclusion holds for any distinct vector with order \(\succ_x\). Apply
this construction separately to every
deterministic tree in the input-independent distribution for \(\mathcal A\).
The output distributions agree, and \(R(z,o)\) implies \(R(x,o)\) by
tie-refinement stability. Thus the error probability on \(x\) is no larger
than that on \(z\).
\end{proof}

The theorem concerns products of pairwise differences, not arbitrary
polynomial queries. It is a transcript simulation; correctness need not
transfer for problems that ask about the ties themselves, such as recovering
all equality classes.

\section{Applications}
\label{sec:applications}

\subsection{Maximum finding with arbitrary error}

Theorem~\ref{thm:lifting} preserves pointwise error. The following analysis of
Ting and Yao's pivot-ascent strategy~\cite{TY94} gives an explicit bound for
any target \(\delta\). Put
\[
  m=\lceil\log_2(n+1)\rceil,
  \qquad q=\lceil\log_2(1/\delta)\rceil,
  \qquad T=16(m+q).
\]
Run the preprocessing of Lemma~\ref{lem:preprocess} once. Choose an arbitrary
initial pivot \(i\). Call a positive parity sign even and a negative sign odd.
In each of \(T\) trials, include every \(a\ne i\) independently in a set
\(B\) with probability \(1/2\), and simulate the parity query
\(\prod_{a\in B}(x_i-x_a)\).

An even parity retains \(i\). For an odd parity, give \(B\) a fresh uniform
random order and repeatedly bisect it into two nearly equal parts. Query one
part and retain the odd part; the other parity is inferred because the current
parity is odd. After at most \(m\) such queries, a singleton remains. Make it
the new pivot. After \(T\) trials, output the pivot.

All parities in this procedure are understood under \(\succ_x\) and are
implemented by Lemma~\ref{lem:one-test}. If
\(U_i=\{a:a\succ_x i\}\), an odd trial localizes an element of \(U_i\).
When \(U_i\ne\varnothing\), fair sampling gives
\(\Pr[|B\cap U_i|\text{ is odd}]=1/2\). Conditional on this event and on the
sampled set \(B\), the fresh uniform ordering makes localization symmetric over
\(B\cap U_i\). Averaging over the exchangeable fair sample makes the returned
element uniform on \(U_i\). At least half of \(U_i\) has at most \(|U_i|/2\)
elements above it. Thus every trial at a nonmaximum pivot is a rank-halving
ascent with conditional probability at least \(1/4\).

Let \(\mathcal F_{t-1}\) be the history before trial \(t\). Set \(Z_t=1\) if
the pivot entering trial \(t\) is maximal, and otherwise let \(Z_t\) indicate
that the trial is a halving ascent. Then
\(\Pr[Z_t=1\mid\mathcal F_{t-1}]\ge1/4\). The usual sequential coupling with
fresh uniform thresholds couples \(Z_t\) from below by independent
\(\operatorname{Bernoulli}(1/4)\) variables. Hence \(\sum_tZ_t\) stochastically
dominates \(X\sim\operatorname{Bin}(T,1/4)\). Every update is an ascent, and
\(m\) halving ascents force \(|U_i|=0\), since initially \(|U_i|<2^m\).
Hence error implies \(\sum_tZ_t<m\). As
\(\mathbb EX=4(m+q)\), Chernoff's inequality gives
\[
  \Pr[\text{error}]
  \le\Pr[X<m]
  \le\exp(-9(m+q)/8)
  \le2^{-q}
  \le\delta.
\]
The preprocessing costs at most \(2m\) tests, and each trial costs at most
\(m+1\). Hence the worst-case depth is
\[
  2m+16(m+q)(m+1)\le25m(m+q).
\]

\begin{corollary}[Maximum finding]
\label{cor:max}
For every \(n\ge2\) and \(0<\delta\le1/4\), maximum finding on arbitrary real
inputs has a randomized polynomial decision tree of depth
\[
  O\bigl(\log n\,[\log n+\log(1/\delta)]\bigr)
\]
and pointwise error at most \(\delta\).
\end{corollary}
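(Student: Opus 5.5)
The corollary follows by assembling the analysis just given for the pivot-ascent procedure with Lemma~\ref{lem:preprocess}, Lemma~\ref{lem:one-test}, and Theorem~\ref{thm:lifting}. There are three checks: the procedure is a legal polynomial decision tree of the stated depth, its error on every input \(x\) is at most \(\delta\), and the explicit bound \(25m(m+q)\) is \(O(\log n[\log n+\log(1/\delta)])\).

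\emph{Legality and depth.} Every query is a star parity query \(\prod_{a\in B'}(x_i-x_a)\), where \(B'\) is either the sampled set \(B\) or a bisection part. Each is fixed by the preceding transcript and by input-independent randomness. By Lemma~\ref{lem:one-test}, each is one polynomial test \(Q_P\), given \(K\), which comes from the \(\lceil\log_2(N+1)\rceil\le 2m\) tests of Lemma~\ref{lem:preprocess}. The bound \(2m\) holds because \(N+1\le (n+1)^2\).

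Each trial costs one query for \(B\) plus at most \(m\) bisection queries. Halving a set of size below \(n\) reaches a singleton within \(\lceil\log_2 n\rceil\le m\) steps. The total is therefore at most \(2m+T(m+1)\), and with \(T=16(m+q)\) this gives \(2m+16(m+q)(m+1)\le 25m(m+q)\), using \(m\ge 2\) for \(n\ge2\). Since \(m=O(\log n)\) and \(q=O(\log(1/\delta))\), the depth is \(O(\log n[\log n+\log(1/\delta)])\).

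\emph{Error.} The analysis works directly with \(\succ_x\)-parities, so ties never arise. An odd parity of \(\prod_{a\in B}(x_i-x_a)\) under \(\succ_x\) means \(|B\cap U_i|\) is odd. Parity is additive over the two bisection parts, so the retained odd part always meets \(U_i\), and the final singleton lies in \(U_i\). Every update is therefore a strict ascent in \(\succ_x\).

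The fair-sampling and symmetry arguments give conditional probability at least \(1/4\) of a halving ascent at any nonmaximal pivot. Coupling then yields \(\sum_t Z_t\succeq X\sim\mathrm{Bin}(T,1/4)\), and an error requires fewer than \(m\) halvings.

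The multiplicative Chernoff bound \(\Pr[X<(1-\eta)\mu]\le e^{-\eta^2\mu/2}\) applies with \(\mu=4(m+q)\) and \(\eta\ge3/4\). It gives \(\exp(-9(m+q)/8)\le 2^{-q}\le\delta\), since \(9/8>\ln 2\).

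The output is the \(\succ_x\)-maximum, which is a true maximizer of \(x\). Equivalently, by Theorem~\ref{thm:lifting} this is the distinct-input error on \(z\), and maximum finding is stable under tie refinement.

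The main obstacle is the uniformity claim: the returned element must be uniform on \(U_i\) given an odd parity. To verify it, I would condition on \(|B\cap U_i|=s\) odd. Exchangeability of the fair sample makes \(B\cap U_i\) a uniform \(s\)-subset. The fresh random order makes the bisection outcome depend only on positions, so by symmetry it is uniform over \(B\cap U_i\). Mixing over \(s\) then preserves uniformity on \(U_i\). Given uniformity, at least \(\lceil |U_i|/2\rceil\) of the targets halve \(|U_i|\), which yields the factor \(1/2\cdot1/2=1/4\).
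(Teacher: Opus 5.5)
Your proposal is correct and follows the paper's own argument. It uses the shared preprocessing of Lemma~\ref{lem:preprocess}, one simulated star query per parity test via Lemma~\ref{lem:one-test}, pivot-ascent trials with halving probability at least \(1/4\), Bernoulli coupling, and the Chernoff bound, yielding depth \(2m+16(m+q)(m+1)\le25m(m+q)\) and error at most \(2^{-q}\le\delta\). Your position-symmetry argument for uniformity on \(U_i\), and your explicit checks that \(\lceil\log_2(N+1)\rceil\le2m\) and \(m\ge2\), are valid elaborations of steps the paper states more briefly.
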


The maximum under \(\succ_x\) is the smallest-index member of the actual
maximum class, so the reported index is valid for the original input. For
inverse-polynomial error, the depth is \(O((\log n)^2)\), matching the known
general-position upper bound~\cite{TY94}. No matching randomized
\(\Omega((\log n)^2)\) depth lower bound is known in the unrestricted-degree
model.

\subsection{Top-\texorpdfstring{\(k\)}{k} selection}

Call \(S\subseteq[n]\), \(|S|=k\), a valid top-\(k\) output when
\(x_i\ge x_j\) for every \(i\in S\) and \(j\notin S\). Lam and Ting's
randomized algorithm on distinct inputs uses
\(O(\log^2 n+k\log n)\) parity tests and has error \(O(n^{-c})\) for every
fixed \(c>1\)~\cite{LT98}. Its queries are precisely products of pairwise
differences, so Theorem~\ref{thm:lifting} applies. The \(k\) largest indices
under \(\succ_x\) form a valid top-\(k\) output for \(x\), including when a
tie straddles the boundary.

\begin{corollary}[Top-\(k\) selection]
\label{cor:topk}
For every fixed \(c>1\) and \(1\le k\le n\), a valid top-\(k\) set of an
arbitrary real input can be selected with pointwise error \(O(n^{-c})\) by a
randomized polynomial decision tree of depth
\[
  O(\log^2 n+k\log n).
\]
\end{corollary}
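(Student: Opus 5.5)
We prove Corollary~\ref{cor:topk} by applying Theorem~\ref{thm:lifting} to the randomized algorithm of Lam and Ting~\cite{LT98}, taken as a black box. Fix \(c>1\). Let \(\mathcal A\) be their parity-test decision tree for distinct inputs, with depth \(D=O(\log^2 n+k\log n)\) and pointwise error \(O(n^{-c})\) on every distinct input. The first step is to check that \(\mathcal A\) fits the hypotheses of Theorem~\ref{thm:lifting}. Every query of \(\mathcal A\) is the sign of a product of pairwise differences, so it has the form \eqref{eq:parity-query}, with some constant \(c\ne0\) and possibly repeated or reversed factors. The randomness is an input-independent distribution over deterministic trees, and the depth bound is worst case. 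If \cite{LT98} states the depth only in expectation or with high probability, I would truncate the tree at a constant multiple of the bound. Truncation adds at most \(O(n^{-c})\) to the error, so this fallback should be harmless.

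The second step is to show that the top-\(k\) relation \(R(x,S)\) is stable under tie refinement. Let \(z\) have distinct coordinates and suppose \(x_i>x_j\) implies \(z_i>z_j\). Suppose \(S\) is valid for \(z\), so \(z_i>z_j\) for all \(i\in S\) and \(j\notin S\). Then \(x_i\ge x_j\) must hold for each such pair: if instead \(x_j>x_i\), the hypothesis would give \(z_j>z_i\), a contradiction. So \(S\) is valid for \(x\). This argument covers ties that straddle the boundary of \(S\), because the definition of validity only requires the weak inequality \(\ge\).

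The third step applies Theorem~\ref{thm:lifting}. We get a polynomial decision tree \(\widetilde{\mathcal A}\) of depth \(D+\lceil\log_2(N+1)\rceil\). Since \(N=\binom n2\), the additive term is \(O(\log n)\), which is absorbed into \(O(\log^2 n+k\log n)\). The theorem also shows that the error of \(\widetilde{\mathcal A}\) on an arbitrary \(x\) is at most the error of \(\mathcal A\) on a distinct input \(z\) with index order \(\succ_x\), for example \(z_i=x_i-\epsilon i\). That error is \(O(n^{-c})\) uniformly over distinct inputs. Hence \(\widetilde{\mathcal A}\) has pointwise error \(O(n^{-c})\) on every \(x\in\R^n\).

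I do not expect any step to be a real obstacle. The only point that needs care is the first step: confirming from \cite{LT98} that the error bound holds pointwise on every distinct input and that the depth bound is worst case, or else arranging this by truncation.
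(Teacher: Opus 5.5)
Your proposal is correct and follows the paper's argument. You apply Theorem~\ref{thm:lifting} to Lam and Ting's parity-test tree, check that the top-\(k\) relation is stable under tie refinement, and absorb the additive \(\lceil\log_2(N+1)\rceil=O(\log n)\) term. Your weak-inequality argument is exactly the paper's remark that the \(k\) largest indices under \(\succ_x\) remain valid even when a tie straddles the boundary.

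One small caution on your truncation fallback. Cutting the tree at a constant multiple of the bound costs only \(O(n^{-c})\) extra error if the depth bound holds with probability \(1-O(n^{-c})\). A bound holding only in expectation would, by Markov, give constant extra error. Both the paper and your main argument simply use the Lam--Ting bound as a worst-case depth, so this does not affect the proof.
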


\section*{Acknowledgments}

The author thanks Michael Ben-Or for sharing the complete earlier
\(O((\log n)^3)\)-depth construction reported in the preliminary
note~\cite{KupferTies}, which motivated this work.

\section*{AI use disclosure}

ChatGPT was used to explore the global lifting construction, check the
algebraic and probabilistic arguments, audit related work, and edit the
exposition. The author takes responsibility for the results and text.


\begingroup
\small

\begin{thebibliography}{9}
\footnotesize
\setlength{\itemsep}{1pt}

\bibitem{Rabin1972}
Michael O. Rabin.
\newblock Proving simultaneous positivity of linear forms.
\newblock \emph{Journal of Computer and System Sciences}, 6(6):639--650,
  1972.
\newblock \url{https://doi.org/10.1016/S0022-0000(72)80034-5}.

\bibitem{MPR94}
Jos\'e L. Monta\~na, Luis M. Pardo, and Tom\'as Recio.
\newblock A note on Rabin's width of a complete proof.
\newblock \emph{Computational Complexity}, 4(1):12--36, 1994.
\newblock \url{https://doi.org/10.1007/BF01205053}.

\bibitem{TY94}
Hing F. Ting and Andrew C. Yao.
\newblock A randomized algorithm for finding maximum with
  \(O((\log n)^2)\) polynomial tests.
\newblock \emph{Information Processing Letters}, 49(1):39--43, 1994.
\newblock \url{https://doi.org/10.1016/0020-0190(94)90052-3}.

\bibitem{LT98}
Tak Wah Lam and Hing Fung Ting.
\newblock Selecting the \(k\) largest elements with parity tests.
\newblock In \emph{Algorithms and Computation (ISAAC 1998)}, volume 1533 of
  \emph{Lecture Notes in Computer Science}, pages 189--197. Springer, 1998.
\newblock \url{https://doi.org/10.1007/3-540-49381-6_21}.

\bibitem{NaorRuah2001}
Moni Naor and Sitvanit Ruah.
\newblock On the decisional complexity of problems over the reals.
\newblock \emph{Information and Computation}, 167(1):27--45, 2001.
\newblock \url{https://doi.org/10.1006/inco.2000.3012}.

\bibitem{TingThesis}
Hing Fung Ting.
\newblock Computational complexity for selection problems with parity-like
  tests.
\newblock Technical Report TR-392-92, Princeton University, 1993.
\newblock \url{https://www.cs.princeton.edu/research/techreps/95}.

\bibitem{KupferTies}
Ron Kupfer.
\newblock Parity tests with ties.
\newblock Preliminary version, arXiv:2604.19158v2 [cs.CC], 2026.
\newblock \url{https://arxiv.org/abs/2604.19158v2}.

\end{thebibliography}
\endgroup
\end{document}